\newtheorem{thm}{Theorem}[section]
\newtheorem{lem}[thm]{Lemma}
\theoremstyle{definition}
\newtheorem{defn}[thm]{Definition}
\theoremstyle{plain}
\newtheoremstyle{nameit}%
{\medskipamount}%
{\medskipamount}%
{\itshape}%
{}%
{\bfseries}%
{.}%
{\labelsep}%
{\thmnote{#3}}%
\theoremstyle{nameit}
\theoremstyle{plain}
\title{The Morphisms with Unstackable Image Words}
\author{C. Robinson Tompkins}
\begin{document}
\maketitle


\begin{abstract}
In an attempt to classify all of the overlap-free morphisms constructively using the Latin-square 
morphism, we came across an interesting counterexample, the Leech square-free morphism. We 
generalize the combinatorial properties of the Leech square-free morphism to gain insights on a
larger class of both overlap-free morphisms and square-free morphisms.
\end{abstract}

\section{Introduction}
The study of overlap-free words and their generators was originated by Axel Thue in 
1912 \cite{thue}. Thue stumbled across overlap-free words in the attempt to find
infinite words that are cube-free. We quickly note that $XXX$ is a cube where 
$X$ is some string of symbols, and a word $W$ avoids cubes if there is no 
subword $XXX$ in $W$. We know the infinite binary word that avoids cubes to be the
Thue-Morse infinite word,
\[01101001100101101001011001101001\ldots\]
\cite{thue}. This infinite word can be generated by function composition of the
Thue-Morse morphism $\mu$ on the letter $0$. Note that the Thue-Morse morphism
is defined as

\vspace{\baselineskip}
\begin{singlespace}
\begin{displaymath}
\mu(t) = \begin{cases}
01 & \textrm{if } t = 0\\
10 & \textrm{if } t = 1.
\end{cases}
\end{displaymath}
\end{singlespace}
\vspace{\baselineskip}
\noindent
Further we define a morphism as a mapping $h:\Sigma^*\to\Delta*$ with 
$\Sigma,\Delta$ being alphabets such that for any two words $V,W\in\Sigma^*$, we 
have $h(VW) = h(V)h(W)$. A morphism $h$ is called cube free provided $h(W)$
is cube-free if and only if $W\in\Sigma^*$ is also cube-free.

Our primary concern however is dealing with overlaps instead of cubes. An overlap is the pattern $cXcXc$ where
$c$ represents a single letter and $X$ is a word with possibly zero letters. The standard example of a word 
that is an overlap in its entirety is ``alfalfa'', and an overlap-free word is a word in which no overlap
occurs. 

A morphism  $h$ is said to be overlap free so long as we have $X\in\Sigma^*$ overlap-free if and only if 
$h(X)$ is overlap-free.
Surprisingly it is known that $\mu$ and its natural complement are the only non-trivial overlap-free morphisms
on the two letter alphabet $\{-0,1\}$ \cite{seebold_binary}.

In the early 80's Crochemore, Ehrenfeucht, and Rozenberg made substancial progress towards 
classifying the square-free morphisms \cite{crochemore_squarefree},\cite{ehren}. Further in 2004,
Richomme and Wlazinski published a result classifying all overlap-free morphisms 
\cite{ric_testsets}. However, their result much like the results of Crochemore, Ehrenfeucht, and 
Rozenberg rely on test-sets of words for the morphism in question. Furthermore, the tests for
Richomme and Wlazinski grow factorially with the size of the input alphabet.

Since the late 1990's and early 2000's, several results have surfaced pursuing a constructive
understanding of the class of overlap-free morphisms. In 2001, Frid suggested using the 
structure of the cyclic group of order $n$ to define each image word accordingly for a morphism 
on an alphabet
of $n$ letters \cite{frid}. In 2007 we extended Frid's result to the use of the Latin-square 
structure to
define our morphism structure \cite{tompkins}.

In a vain attempt to use the Latin-square morphism construction to classify all of the overlap-free
morphisms, we stumbled across the Leech square-free morphism in \cite{albook}. The following is the 
Leech square-free morphism

\begin{singlespace}
\begin{displaymath}
h(t) = \begin{cases}
0121021201210 &\textrm{for } t = 0\\
1202102012021&\textrm{for } t = 1\\
2010210120102&\textrm{for } t = 2.
\end{cases}
\end{displaymath}
\end{singlespace}

\quad\\
\noindent
which originally appeared in \cite{leech}.  This lead us to the definition of the morphism with unstackable image words. Note that the definition depends upon a combinatorial property and is not entirely constructive. We have yet to overcome this problem.


\section{Preliminaries}

We will use the standard definitions from the Lothaire book on combinatorics on words for our definitions 
with a few additions \cite{lothare_book}. 

We begin by defining an alphabet $\Sigma$ to be a finite set of symbols from which we will make words 
by concatenation (note, we will use capital Greek letters for alphabets). Further, we define a word $W$ to 
be a list of symbols from any alphabet $\Sigma$ written horizontally (we will use capital letters to denote 
words and lower case letters to denote letters). We will denote the word with no letters, that is the empty 
word, by $\varepsilon$. 

\subsection{Words}

The length (or number of letters) for a word $W$ will be written $|W|$. Note that we will use the same 
symbol to represent the size of a set or absolute value. The difference will be clear based on context. 
Notice that $|\varepsilon| = 0$. Further we will represent $|W|_a$ to represent the number of times the 
letter $a$ occurs in $W$. Also we will use $|W|_{aba}$ to represent the number of times the word $aba$ 
occurs in $W$. For example if $C = abaababa$, then we have $|C|_{aba} = 3$ along with $|C| = 8$.

A word $U$ is a factor of a word $V$ if there exist two (possibly empty) words $S$ and $T$ such that $V = 
TUS$. We will also say that $U$ is a subword of $V$ (or $V$ contains $U$). If $T=\varepsilon$, then we 
call $U$ the prefix of $V$. Similarly, if $S = \varepsilon$, then we call $U$ the suffix of $V$.

For some alphabet $\Sigma$, $\Sigma^*$ is the Kleene closure of our alphabet. That is, $\Sigma^*$ is all 
of the possible words over the alphabet $\Sigma$. Notice that $\Sigma^*$ is the free monoid over the set 
$\Sigma$.

\subsection{Morphisms}

A morphism $h$ is a mapping from $\Sigma^*$ into $\Delta^*$, where $\Sigma$ and $\Delta$ are 
alphabets, such that $h(WV) = h(W)h(V)$ for all 
words $W,V\in\Sigma^*$, and $h(\varepsilon) = \varepsilon$. Note that $W$ and $V$ could 
potentially be single letters. We note that if $X\subseteq\Sigma$ ($X$ represents a set of words) for some 
alphabet $\Sigma$, $h(X)$ represents the set of words $\{h(W):W\in X\}$. Further, we call $h$ 
non-erasing if for all $a\in\Sigma$, where $\Sigma$ is an alphabet, $h(a)\neq\varepsilon$.

Recall from earlier that the Thue-Morse morphism, $\mu$ defined as

\vspace{\baselineskip}
\begin{singlespace}
\begin{displaymath}
\mu(t) = 
\begin{cases}
01, & \textrm{for } t=0\\
10, & \textrm{for } t=1,
\end{cases}
\end{displaymath}
\end{singlespace}
\vspace{\baselineskip}

\noindent
is a morphism defined on the alphabet with two letters. For convenience, we will call the alphabet with $n
$ letters $\Sigma_n$. Infinite words are possible with such a morphism. We have displayed the $n^
\mathrm{th}$ Thue-Morse word as being $\mu^n(0)$. We will use $\omega$ to represent the first infinite 
ordinal. So the Thue-Morse infinite word becomes
\[\mathbf{T} = \lim_{n\to\infty}\mu^n(0) = \mu^\omega (0),\]
 as previously seen. Note that we will use bold capitol letters to represent infinite words, with $\mathbf{T}$ 
here representing 
the Thue-Morse infinite word.

When discussing $\Sigma_2 = \{0,1\}$, the two letter alphabet we will use $\bar{0}$ to denote 
the complement of 0 (or 1 if we need that complement). That is $\bar{0}=1$ and $\bar{1}=0$. This will 
become necessary in Chapter 2.

For some morphism $h:\Sigma^*\to\Delta^*$, we will call $h$ uniform if $|h(a)| = n$ for some integer $n$ 
for all $a\in\Sigma$ (more exactly, in this case we will call $h$ $n$-uniform).  We will call a morphism $h:
\Sigma^*\to\Delta^*$ square-free when $h(W)$ is square-free if and only if $X\in\Sigma^*$ is square-free. 
Similarly, we will call $h:\Sigma^*\to\Delta^*$ an overlap-free morphism when $h(W)$ is overlap-free if 
an only if $W\in\Sigma^*$ is overlap-free.


\section{The Morphism With Unstackable Image Words}

In a vain attempt to classify all of the overlap-free morphisms using the latin square morphism 
\cite{tompkins}, we stumbled across the Leech square-free morphism in \cite{albook}. The following is the 
Leech square-free morphism

\begin{singlespace}
\begin{equation*}
h(t) = \begin{cases}
0121021201210 & \textrm{for } t = 0\\
1202102012021 & \textrm{for } t = 1\\
2010210120102 & \textrm{for } t = 2,
\end{cases}
\end{equation*}
\end{singlespace}

\quad\\
\noindent
which originally appeared in \cite{leech}. Noticing that this morphism was overlap-free put a hole in our 
attempt to classify all of the overlap-free morphisms using Latin square morphisms. But on the other 
hand, we now could potentially find another class of overlap-free morphisms that could be explained in a 
better manner than with test-sets as in \cite{ric_testsets}.

Using the test-set result given by Richomme and Wlazinski, we found the following overlap-free 
morphisms on four letters

\begin{singlespace}
\begin{equation*}
f(t) = \begin{cases}
0123 1230 1 0321 3210 & \textrm{for } t = 0\\
1230 2301 2 1032 0321 & \textrm{for } t = 1\\
2301 3012 3 2103 1032 & \textrm{for } t = 2\\
3012 0123 0 3210 2103 & \textrm{for } t = 3,
\end{cases}
\end{equation*}
\end{singlespace}

\quad\\
\noindent
and

\begin{singlespace}
\begin{equation*}
g(t) = \begin{cases}
0123 0122121120 3210 & \textrm{for } t = 0\\
1230 1300303301 0321 & \textrm{for } t = 1\\
2301 2012331022 1032 & \textrm{for } t = 2\\
3012 3011010013 2103 & \textrm{for } t = 3.
\end{cases}
\end{equation*}
\end{singlespace}

\quad\\
\noindent
The morphism $g$ raised a considerable number of questions as to why it was overlap-free. It seemed to 
avoid a considerable number of the techniques used in the proof For the Latin square morphisms. So the 
natural question was: what does the morphism $g$ have in common with the Leech square-free 
morphism that causes its overlap-freeness. 


\section{Definitions and Theorems}

The overlap-free morphisms displayed above are tied together with the following definition.

\begin{defn}
Let $h:\Sigma^*\to\Delta^*$ be an $n$-uniform morphism. We say that $h$ is a morphism with
unstackable image words if it 
satisfies the 
following properties:
\begin{itemize}
\item[(i)] $h(W)$ is overlap-free for all overlap-free words $W\in\Sigma^*$ with $|W|=3$.
\item[(ii)] For $a,b\in\Sigma$, and for all $V\in\Sigma^*$ such that $|V|\le\lfloor n/2\rfloor$,
\[h(a) = SV\quad\textrm{and}\quad h(b) = VU\]
if and only if $S$ is not a suffix of any image word of $h$ and $U$ is not a prefix of any image word of $h
$.
\end{itemize}
\label{def_5_1}
\end{defn}

We now prove a lemma that captures the combinatorial properties in the first 
portion of Definition \ref{def_5_1}.

\begin{lem}
Let $\Sigma$ be an alphabet with more than one letter. Let $h:\Sigma^*\to\Delta^*$ be a morphism such 
that $h(W)$ is overlap-free for all overlap-free $W\in
\Sigma^*$ with $|W| = 3$. We then have the following properties:
\begin{itemize}
\item[(i)] $h(a)$ is overlap-free for all $a\in\Sigma$.
\item[(ii)] $h(a)h(b)$ is overlap-free for all $a,b\in\Sigma$.
\item[(iii)] $h(a)$ and $h(b)$ do not begin or end with the same letter, whenever $a,b\in\Sigma$ and $a
\neq b$.
\end{itemize}
\label{lem_5_2}
\end{lem}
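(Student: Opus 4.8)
The plan is to prove the three properties in order, deriving each from the hypothesis that $h(W)$ is overlap-free for every overlap-free word $W$ of length $3$, by exhibiting suitable overlap-free length-$3$ words and reading off information from their images. Throughout I will use the observation that a factor of an overlap-free word is overlap-free, so that if $h(W)$ is overlap-free then so is $h(a)$, $h(b)$, $h(c)$, and $h(a)h(b)$, $h(b)h(c)$ for $W = abc$.

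For (i), since $\Sigma$ has at least two letters, pick distinct $a,b\in\Sigma$ and consider one of the overlap-free length-$3$ words $aba$ or $bab$ or $abab$'s length-3 factors — more simply, $aba$ is overlap-free only if $\ldots$ actually $aba$ \emph{is} overlap-free, as is $bab$, and for any letter $a$ the word $a$ appears as a factor of the overlap-free word $aba$; hence $h(aba)$ is overlap-free and therefore its factor $h(a)$ is overlap-free. This handles every $a\in\Sigma$. For (ii), given $a,b\in\Sigma$: if $a\neq b$, then $aba$ is overlap-free, so $h(aba)=h(a)h(b)h(a)$ is overlap-free, hence its prefix $h(a)h(b)$ is overlap-free. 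If $a=b$, I need $h(a)h(a)$ overlap-free, which requires more care: the word $aa$ is overlap-free, but I must realize $aa$ as a factor of an overlap-free length-$3$ word. Since $|\Sigma|\ge 2$, choose $c\neq a$; then $aac$ is overlap-free (it contains no overlap, as the only square is $aa$ and extending it by $c\neq a$ creates no $cXcXc$ pattern), so $h(aac)=h(a)h(a)h(c)$ is overlap-free and its prefix $h(a)h(a)$ is overlap-free.

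For (iii), I argue by contradiction. Suppose $a\neq b$ but $h(a)$ and $h(b)$ begin with the same letter, say both begin with the letter $x\in\Delta$. Consider the overlap-free word $W = bab$ (overlap-free since $a\neq b$). Then $h(W) = h(b)h(a)h(b)$ is overlap-free by hypothesis. But look at the factor straddling the $h(a)h(b)$ boundary together with the first letter of $h(a)$: writing $h(a) = x A$ and $h(b) = x B$, the image $h(b)h(a)h(b)$ contains, ending at the start of the second $h(b)$-block, the pattern determined by the last letter of $h(b)$, all of $h(a) = xA$, and then $x$ — I need to locate an overlap. The cleanest route: since $h(a)$ and $h(b)$ share the first letter $x$, the word $h(a)h(b)$ has $x$ as a prefix and, $|h(a)|$ positions later, $x$ again begins $h(b)$; more usefully, consider $h(b)h(a)$, which is overlap-free (factor of $h(bab)$, using $a\neq b$), and note it equals (last letter of $h(b)$)$\cdots x \cdots$ where $x$ is at position $1$ and position $|h(b)|+1$. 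This by itself is not an overlap. So instead I will use a length-$3$ word with a repeated letter on the outside: $h(aba) = h(a)h(b)h(a)$, with $h(a)=xA$, $h(b)=xB$. The first letter is $x$; the letter at position $|h(a)|+1$ is $x$; and there is no immediate overlap either. The real leverage is this: consider the words $h(ab\,a)$ and compare the block boundaries. I expect the main obstacle here to be pinning down exactly which three-letter overlap-free word forces the contradiction; the right choice is to use that both $h(a)h(b)h(a)$ and $h(b)h(a)h(b)$ are overlap-free and combine the shared-prefix (or shared-suffix) condition with the internal period $|h(a)|=|h(b)|$ available when $h$ is uniform — but the lemma does not assume uniformity, so I must instead exploit the freedom to pick a \emph{third} letter $c$ (available since we only know $|\Sigma|\ge 2$, so I may in fact need to handle $|\Sigma|=2$ and $|\Sigma|\ge 3$ separately). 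For $|\Sigma|\ge 3$: if $h(a),h(b)$ both start with $x$, then in $h(abc)=h(a)h(b)h(c)$ the factor $x A x$ (first letter, through $h(a)$, first letter of $h(b)$) must appear, and $xAx$ is itself an overlap (of the form $cXcXc$ with $c=x$, wait — $xAx$ has $x$ only twice, not an overlap). Thus the genuinely correct and delicate argument is: $h(a)h(b)$ and $h(b)h(a)$ are both overlap-free; if they share a common prefix letter $x$, then $h(a)h(b)h(a)h(b)$ is not overlap-free would follow from $h(abab)$ — but $abab$ is \emph{not} overlap-free. So the contradiction must come purely from a length-$3$ image. I will therefore set it up as: assume $h(a)$ and $h(b)$ end with the same letter $y$ (the "begin" case being symmetric via reversal); write $h(a)=Ay$, $h(b)=By$; then in $h(bab)=h(b)h(a)h(b) = (By)(Ay)(By)$ the suffix $y\,(By)\,$ — no. I concede that the precise combinatorial identification is the crux, and the plan is: carefully write out $h(aba)=h(a)h(b)h(a)$ with the shared first letter $x$, observe that the occurrence of $h(a)$ as prefix and the occurrence of $h(a)$ as suffix both start with $x$, and that the middle $h(b)$ also starts with $x$, giving three occurrences of $x$ at positions $1$, $1+|h(a)|$, and $1+|h(a)|+|h(b)|$; when $|h(a)|=|h(b)|$ this is an arithmetic progression and yields the overlap $x\,Z\,x\,Z\,x$ where $|Z|=|h(a)|-1$, contradicting overlap-freeness. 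For the non-uniform case I expect to invoke property (iii)'s "begin" and "end" parts together with (ii) and a short case analysis; this case-splitting and the AP-to-overlap step is where I anticipate spending the most effort.
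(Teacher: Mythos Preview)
Your arguments for (i) and (ii) are correct and match the paper's proof: embed $a$ (resp.\ $ab$, resp.\ $aa$) in an overlap-free length-$3$ word and use that factors of overlap-free words are overlap-free.

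Your argument for (iii), however, has a genuine gap. You eventually settle on looking at $h(aba)=h(a)h(b)h(a)$ and, writing $h(a)=xA$, $h(b)=xB$, you note the three occurrences of $x$ at positions $1$, $1+|h(a)|$, $1+|h(a)|+|h(b)|$. But three occurrences of $x$ in arithmetic progression do \emph{not} by themselves give an overlap: the word between the first and second $x$ is $A$, while the word between the second and third $x$ is $B$, and the pattern $xAxBx$ is an overlap only if $A=B$, i.e.\ $h(a)=h(b)$. So even in the uniform case your ``AP-to-overlap step'' fails, and your plan for the non-uniform case is left entirely unspecified.

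The fix is to choose a different length-$3$ word, namely $aab$ (which is overlap-free since $a\neq b$). Then
\[
h(aab)=h(a)h(a)h(b)=(xA)(xA)(xB)
\]
has $xAxAx$ as a prefix, and this \emph{is} an overlap (of the form $cXcXc$ with $c=x$, $X=A$), contradicting the hypothesis. No uniformity is needed. The ``end with the same letter'' case is symmetric, using $baa$. This is exactly the paper's argument, and it is the step you were circling around but never landed on.
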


\begin{proof}
(i) Let us first state that the result does not apply when $|\Sigma|=1$ because there are no overlap-free 
word of length three for this alphabet. For $|\Sigma|>1$ this result is clear because if we assume for a 
contradiction that $h(a)$ contained an overlap for 
any $a\in\Sigma$, then $h(bab)$, with $b\neq a$, would contain an overlap which contradicts our 
assumption.

(ii) Similar to (i), if we assume that $h(a)h(b) =  h(ab)$ contained an overlap, then $h(aba)$ would contain 
an overlap. Again this contradicts our assumption. We also must show that $h(aa)$ does not contain an 
overlap. Assume for a contradiction that it does, and we quickly obtain our contradiction by observing that 
then $h(aab)$ must contain an overlap.

(iii) Assume that for some $a,b\in\Sigma$, $h(a)$ and $h(b)$ begin with the same letter. Then, 
$h(aab)$ would contain an overlap, which contradicts our assumption. The argument for $h(b)$ and 
$h(b)$ ending with different letters is similar.
\end{proof}

\begin{thm}
Any morphism with unstackable image words is overlap-free.
\label{thm_5_3}
\end{thm}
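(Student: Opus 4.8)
The plan is to show that if $h$ is a morphism with unstackable image words and $W \in \Sigma^*$ is overlap-free, then $h(W)$ is overlap-free; the converse direction (that $h(W)$ overlap-free forces $W$ overlap-free) follows easily, since if $W$ contains an overlap $cXcXc$ then $h(cXcXc) = h(c)h(X)h(c)h(X)h(c)$ is visibly an overlap in $h(W)$. So I would focus on the forward direction and argue by contradiction: suppose $W$ is overlap-free but $h(W)$ contains an overlap $cYcYc$ (with $c \in \Delta$, $Y \in \Delta^*$). Since $h$ is $n$-uniform, I would look at where the occurrences of this overlap sit relative to the block decomposition of $h(W)$ into image words $h(w_1)h(w_2)\cdots h(w_k)$. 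The two copies of $cYc$ are shifted by $p := |cY|$ letters; I would want to show $p$ is a multiple of $n$, which by $n$-uniformity would let me pull the overlap back to an overlap in a short factor of $W$ and contradict part (i) of the definition together with Lemma \ref{lem_5_2}.

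The core of the argument is a synchronization/desynchronization dichotomy. First I would handle the case where the overlap is ``short,'' i.e. contained within at most three consecutive image words: then it lives inside $h(uvw)$ for some factor $uvw$ of $W$ of length at most $3$, and since $W$ is overlap-free so is $uvw$ (a factor of an overlap-free word), whence $h(uvw)$ is overlap-free by Lemma \ref{lem_5_2}(ii) applied twice or directly by Definition \ref{def_5_1}(i) — contradiction. So I may assume the overlap spans at least four image-word boundaries, which forces $|Y| \geq n$ roughly. Now consider the shift $p = |cY|$. If $p$ is a multiple of $n$, then the periodicity of $cYcYc$ with period $p$ means consecutive image-word-length blocks of $h(W)$ repeat, which (after a short check at the endpoints using Lemma \ref{lem_5_2}(iii) to recover the preimage letters) produces an overlap $cXcXc$ inside $W$ itself — contradiction. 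If $p$ is not a multiple of $n$, write $p = qn + r$ with $0 < r < n$; then the overlap induces a nontrivial ``sliding'' of one image word against another by $r$ positions, so that for some letters $a, b$ we get a long common factor of $h(a)$ and $h(b)$ straddling a boundary — this is exactly the configuration that Definition \ref{def_5_1}(ii) is built to control.

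The key technical step, and the one I expect to be the main obstacle, is making precise how an overlap with non-aligned period $r$ forces a decomposition $h(a) = SV$, $h(b) = VU$ with $|V| \leq \lfloor n/2 \rfloor$ of the type forbidden by condition (ii) — or more precisely, forbidden unless $S$ is a suffix and $U$ is a prefix of image words, which I must then show is itself incompatible with the overlap. The subtlety is that the matched factor $V$ arising from the overlap could a priori have length between $\lfloor n/2 \rfloor$ and $n$, so I would need a separate combinatorial argument (using that $cYcYc$ has the strong period structure of an overlap, plus Lemma \ref{lem_5_2}(i) that each $h(a)$ is itself overlap-free) to rule out the ``large overlap between adjacent images'' case, reducing to $|V| \leq \lfloor n/2 \rfloor$ where (ii) applies. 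Once (ii) applies, the surviving case is $h(a) = SV$, $h(b) = VU$ with $S$ a genuine suffix and $U$ a genuine prefix of image words; I would then argue that this alignment, propagated along the overlap, would let me ``re-synchronize'' and again descend to an overlap in a length-$3$ factor of $W$, the final contradiction. Throughout, Lemma \ref{lem_5_2}(iii) is the tool that lets me uniquely identify preimage letters from the first (or last) letter of each $n$-block, which is what legitimately transports configurations between $h(W)$ and $W$.
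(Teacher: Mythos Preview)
Your overall architecture matches the paper's: contrapositive, split on whether the period $p=|cY|$ is a multiple of $n$, use Lemma~\ref{lem_5_2}(iii) to pull back when $p\equiv 0\pmod n$, and invoke condition~(ii) when $p\not\equiv 0\pmod n$. The gap is in how you handle the non-aligned case. You worry that the matched word $V$ might have length strictly between $\lfloor n/2\rfloor$ and $n$ and propose an unspecified ``separate combinatorial argument'' to exclude this. No such extra argument is needed: when you align the two halves of the overlap, a tile boundary on one copy falls inside a tile on the other, cutting that tile into a suffix $V$ of one image word and a prefix $U$ of the next, with $|V|+|U|=n$. Hence $\min(|V|,|U|)\le\lfloor n/2\rfloor$, and you simply apply condition~(ii) with whichever of $V,U$ is short (each is simultaneously a suffix of some $h(a)$ and a prefix of some $h(b)$ via the alignment). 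This complementarity is the whole trick; the paper runs it in four sub-cases depending on the ordering of the residues $r_0,r_1,r_2$, but the mechanism is always the same.

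Second, you have inverted what condition~(ii) says. You write that the decomposition $h(a)=SV$, $h(b)=VU$ is ``forbidden unless $S$ is a suffix and $U$ is a prefix of image words,'' leaving a surviving case to resynchronize. In fact (ii) asserts the opposite: whenever such a decomposition exists with $|V|\le\lfloor n/2\rfloor$, then $U$ is \emph{not} a prefix of any image word and $S$ is \emph{not} a suffix of any image word. But the overlap alignment forces $U$ to coincide with the prefix of the next tile on the other copy, an immediate contradiction. There is no surviving case and no resynchronization step; once (ii) fires, the $p\not\equiv 0\pmod n$ case is finished.
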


\begin{proof}
We begin by assuming that $h$ is a morphism with unstackable image words with 
$|h(a)|=n$ for all $a\in\Sigma$. We must show 
that for all $W\in\Sigma^*$, $W$ 
is overlap-free if and only if $h(W)$ is overlap-free. We will begin with the easy direction first.

\subsection{The $\Leftarrow$ direction} Assume that $W=AcXcXcB$, so that we can argue by 
contrapositive that $h(W)$ must also contain an 
overlap. Notice that
\[h(W) = h(A)h(c)h(X)h(c)h(X)h(c)h(B).\]
Set $h(c) = dY$ where $d\in\Sigma$ and $Y\in\Sigma^*$, then $h(W) = h(A)dYh(X)dYh(x)dYh(B)$. 
So then $h(W)$ contains the overlap $dYh(X)dYh(X)d$, and we are done with the first portion of our 
argument.

\subsection{The $\Rightarrow$ direction}Conversely we will argue by contrapositive. We will assume that 
$h(W)$ contains an overlap and show 
that $W$ must also contain an overlap. So assume that for some $W\in\Sigma^*$ we have
\begin{displaymath}
h(W) = Ac_{j_0}Xc_{j_1}Xc_{j_2}B,
\end{displaymath}
where $c=c_{j_0}=c_{j_1}=c_{j_2}$. We use the 0, 1 and 2 to denote which $c$ we will refer to. Further, 
the index $j_i$ will refer to which letter in the word $h(W)$ we are referring to, noting that we are indexing 
beginning with 0.

We will proceed with two separate arguments. The first argument will be that it is not possible to write 
$h(W)$ with $|cX|\not\equiv 0\pmod{n}$. The second argument will be that $W$ must contain an overlap 
if $|cX|\equiv 0 \pmod{n}$.

\subsubsection{The $|cX|\not\equiv 0\pmod{n}$ case.} Notice that we must have the overlap in $h(W)$ 
contained 
in $h(Z)$ where $|Z|>3$ is some subword of $W$. Otherwise we would be breaking hypothesis (i) in the 
definition of Pooh morphisms. 

We begin by setting
\[r_i\equiv j_i\pmod{n},\]
where $i\in\{0,1,2\}$ and $r_i\in\{0,1,\ldots,n-1\}$. We will argue first based on the number of tiles that the 
overlap occurs in, and then by cases. When the overlap occurs over four tiles (noting that occurring over 
three tiles contradicts the hypothesis), we will observe four cases. The cases are

\begin{singlespace}
\begin{align*}
r_0\le r_2<r_1,\\
r_2<r_0<r_1,\\
r_1<r_0 \le r_2,\\
r_1<r_2<r_0.\\
\end{align*}
\end{singlespace}

\noindent
We note that the cases $r_0<r_1<r_2$ and $r_2<r_1<r_0$ force the overlap to occur in a number other 
than four tiles. When the overlap occurs in more than four tiles we will more simply consider the two 
cases $r_0<r_1$ and $r_1<r_0$. Finally, we note the following relationship between $r_0$, $r_1$, and 
$r_2$.
\begin{equation}
r_2\equiv 2r_1 - r_0 \pmod{n}.
\label{eq_5_1}
\end{equation}

Consider the notion of the tiling of a line segment. We will use this notion of tiling in application to working 
with $h(W)$. The tiles we speak of are the image words of $h$. Note that all the image words must be of 
the 
same length $n$, this is crucial to our argument. For ease we will use $T_{s_i}$ with $i \in\{0,1,2\}$ to 
denote the tile containing $c_{j_i}$. Note that $s_i$ is the number of the tile if we numbered them 
starting with the first tile as $T_{0}$.

\emph{The overlap is contained in 4 tiles.} Let us consider the case where there is some subword of $W$, 
say $Z$, with $|Z|=4$ and the overlap in $h(W)$ is contained in $h(Z)$. As in the argument for a Latin 
square morphism to be overlap-free we will consider the word $h(Z)$ to be a line. We will draw in small 
vertical lines to signify the edges of the tiles, and we will draw taller labeled vertical lines to signify the $c
$'s in the overlap.

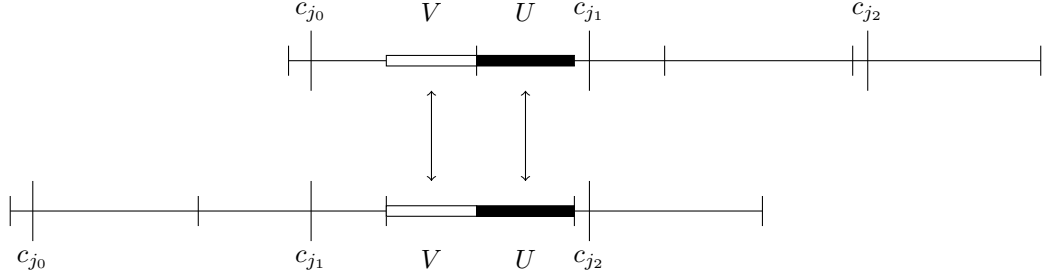
\begin{figure}[h]
\centerline{
\begin{tikzpicture}
%
%
\draw (-3.1,1.0)--(6.9,1.0);
\draw (-6.8,-1.0)--(3.2,-1.0);
\draw (-6.8,-1.2)--(-6.8,-.8);
\draw (-4.3,-1.2)--(-4.3,-.8);
\draw (-1.8,-1.2)--(-1.8,-.8);
\draw (.7,-1.2)--(.7,-.8);
\draw (3.2,-1.2)--(3.2,-.8);
\draw (-3.1,.8)--(-3.1,1.2);
\draw (-.6,.8)--(-.6,1.2);
\draw (1.9,.8)--(1.9,1.2);
\draw (4.4,.8)--(4.4,1.2);
\draw (6.9,.8)--(6.9,1.2);
\draw (-2.8,.6)--(-2.8,1.4);
\draw (.9,.6)--(.9,1.4);
\draw (4.6,.6)--(4.6,1.4);
\draw (-2.8,-.6)--(-2.8,-1.4);
\draw (.9,-.6)--(.9,-1.4);
\draw (-6.5,-.6)--(-6.5,-1.4);
\coordinate[label=above:$c_{j_0}$] (A) at (-2.8,1.4);
\coordinate[label=above:$c_{j_1}$] (B) at (.9,1.4);
\coordinate[label=above:$c_{j_2}$] (C) at (4.6,1.4);
\coordinate[label=below:$c_{j_0}$] (a) at  (-6.5,-1.4);
\coordinate[label=below:$c_{j_1}$] (b) at (-2.8,-1.4);
\coordinate[label=below:$c_{j_2}$] (c) at  (.9,-1.4);
\draw[fill=white] (-1.8,1.07) rectangle (-.6,.93);
\draw[fill=white] (-1.8,-.93) rectangle (-.6,-1.07);
\draw[<->](-1.2,-.6)--(-1.2,.6) ;
\coordinate[label=above:$V$] (Vu) at (-1.2,1.4);
\coordinate[label=below:$V$] (vd) at (-1.2,-1.4);
\draw[fill=black] (-.6,1.07) rectangle (.7,.93);
\draw[fill=black] (-.6,-.93) rectangle (.7,-1.07);
\coordinate[label=above:$U$] (Uu) at (.05,1.4);
\coordinate[label=below:$U$] (Ud) at (.05,-1.4);
\draw[<->](.05,.6)--(.05,-.6);
\end{tikzpicture}
}
\caption{The short overlap with $r_2<r_0<r_1$}\label{fig_5_1}
\end{figure}

In Figure \ref{fig_5_1}, we have taken $c_{j_0}Xc_{j_1}Xc_{j_2}$ and written it twice aligning $c_{j_0}Xc_{j_1}$ 
in the upper line with $c_{j_1}Xc_{j_2}$ in the lower line for the purpose of equating the terms through 
the overlap. Figure \ref{fig_5_1} displays the case when $r_2<r_0<r_1$. We remark here that the case when 
$r_2=r_0<r_1$ proceeds in the same manner.

Let $V$ to be the final $r_1-r_0$ letters in the tile $T_{s_0}$, as we have drawn in Figure \ref{fig_5_1}.  Similarly 
we choose $U$ to be the first $r_1-r_2$ letters in $T_{s_1}$. Now equation (\ref{eq_5_1}) gives that in 
the $r_2<r_0<r_1$ situation we have that $n-(r_1-r_0) = r_1-r_2$. Clearly then we must have $|U|=r_1-
r_2\le\lfloor n/2 \rfloor$ or $|V|=r_1-r_0\le\lfloor n/2 \rfloor$. In the case when $|V|\le\lfloor n/2\rfloor$ we 
cannot equate $U$ with any prefix of a tile which leads to a contradiction. In the other case when $|U|\le
\lfloor n/2\rfloor$ we cannot equate $V$ with any suffix of a tile which leads to a contradiction. So this 
case is not possible.

We now consider the case where $r_0<r_2<r_1$ as shown in Figure \ref{fig_5_a}.

\begin{figure}[h]
\centerline{
\begin{tikzpicture}
%
%
\draw (-4.5,1.0)--(5.5,1.0);
\draw (-8.7,-1.0)--(1.3,-1.0);
\draw (-4.5,.8)--(-4.5,1.2);
\draw (-2.0,.8)--(-2.0,1.2);
\draw (.5,.8)--(.5,1.2);
\draw (3.0,.8)--(3.0,1.2);
\draw (5.5,.8)--(5.5,1.2);
\draw (-8.7,-.8)--(-8.7,-1.2);
\draw (-6.2,-.8)--(-6.2,-1.2);
\draw (-3.7,-.8)--(-3.7,-1.2);
\draw (-1.2,-.8)--(-1.2,-1.2);
\draw (1.3,-.8)--(1.3,-1.2);
\coordinate[label=above:$c_{j_0}$] (A) at (-4.2,1.4);
\coordinate[label=above:$c_{j_1}$] (B) at (0,1.4);  
\coordinate[label=above:$c_{j_2}$] (C) at (4.2,1.4); 
\coordinate[label=below:$c_{j_0}$] (a) at  (-8.4,-1.4);
\coordinate[label=below:$c_{j_1}$] (b) at (-4.2,-1.4); 
\coordinate[label=below:$c_{j_2}$] (c) at  (0,-1.4);  
\draw (-4.2,.6)--(-4.2,1.4);
\draw (0,.6)--(0,1.4);
\draw (4.2,.6)--(4.2,1.4);
\draw (-8.4,-1.4)--(-8.4,-.6);
\draw (-4.2,-1.4)--(-4.2,-.6);
\draw (0,-1.4)--(0,-.6);
\draw[fill=black] (-2,1.07) rectangle (-1.2,.93);
\draw[fill=black] (-2,-.93) rectangle (-1.2,-1.07);
\coordinate[label=above:$U$] (Vu) at (-1.6,1.4);
\coordinate[label=below:$U$] (vd) at (-1.6,-1.4);
\draw[fill=white] (-3.7,1.07) rectangle (-2.0,.93);
\draw[fill=white] (-3.7,-.93) rectangle (-2.0,-1.07);
\coordinate[label=above:$V$] (Uu) at (-2.8,1.4);
\coordinate[label=below:$V$] (Ud) at (-2.8,-1.4);
\draw[<->](-1.6,-.6)--(-1.6,.6);
\draw[<->](-2.85,-.6)--(-2.85,.6);
\end{tikzpicture}
}
\caption{The short overlap with $r_0<r_2<r_1$}\label{fig_5_a}
\end{figure}
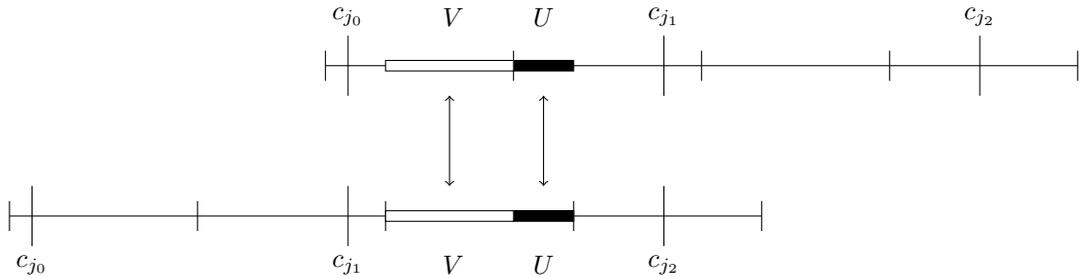

\noindent
In this case we choose $V$ to be the final $r_1-r_0$ letters in $T_{s_0}$, and we also pick $U$ to be the 
first $r_1-r_2$ letters in $T_{s_1}$ as drawn in Figure \ref{fig_5_a}. Again we notice that $n-(r_1-r_0) = r_1-r_2$ 
so either $V\le\lfloor n/2\rfloor$ or $|U|\le\lfloor n/2\rfloor$, either of which is impossible. So we cannot 
have this case occurring either.

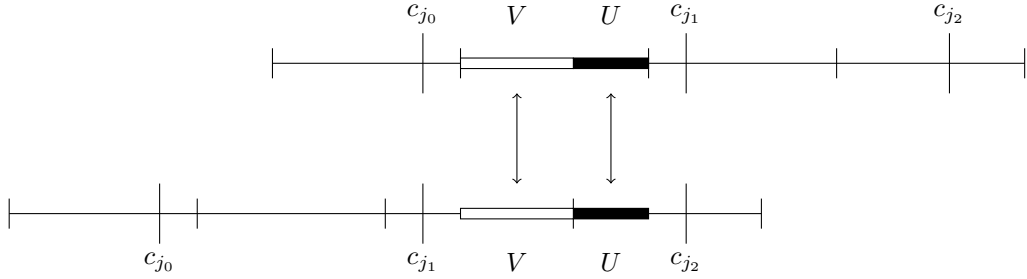
\begin{figure}[h]
\centerline{
\begin{tikzpicture}
%
%
\draw (-6.5,-1.0)--(3.5,-1.0);
\draw (-3.0,1.0)--(7.0,1.0);
\draw (-3.0,.8)--(-3.0,1.2);
\draw (-.5,.8)--(-.5,1.2);
\draw (2.0,.8)--(2.0,1.2);
\draw (4.5,.8)--(4.5,1.2);
\draw (7.0,.8)--(7.0,1.2);
\draw (-6.5,-1.2)--(-6.5,-.8);
\draw (-4.0,-1.2)--(-4.0,-.8);
\draw (-1.5,-1.2)--(-1.5,-.8);
\draw (1.0,-1.2)--(1.0,-.8);
\draw (3.5,-1.2)--(3.5,-.8);
\draw (-1.0,.6)--(-1.0,1.4);
\draw (2.5,.6)--(2.5,1.4);
\draw (6.0,.6)--(6.0,1.4);
\draw (-4.5,-1.4)--(-4.5,-.6);
\draw (-1.0,-1.4)--(-1.0,-.6);
\draw (2.5,-1.4)--(2.5,-.6);
\coordinate[label=above:$c_{j_0}$] (A) at (-1.0,1.4);
\coordinate[label=above:$c_{j_1}$] (B) at (2.5,1.4);  
\coordinate[label=above:$c_{j_2}$] (C) at (6.0,1.4); 
\coordinate[label=below:$c_{j_0}$] (a) at  (-4.5,-1.4);
\coordinate[label=below:$c_{j_1}$] (b) at (-1.0,-1.4); 
\coordinate[label=below:$c_{j_2}$] (c) at  (2.5,-1.4); 
\draw[fill=black] (1.0,1.07) rectangle (2,.93);
\draw[fill=black] (1.0,-.93) rectangle (2,-1.07);
\coordinate[label=above:$U$] (Vu) at (1.5,1.4);
\coordinate[label=below:$U$] (vd) at (1.5,-1.4);
\draw[fill=white] (-.5,1.07) rectangle (1,.93);
\draw[fill=white] (-.5,-.93) rectangle (1,-1.07);
\coordinate[label=above:$V$] (Uu) at (.25,1.4);
\coordinate[label=below:$V$] (Ud) at (.25,-1.4);
\draw[<->](1.5,-.6)--(1.5,.6);
\draw[<->](.25,-.6)--(.25,.6);
\end{tikzpicture}
}
\caption{The short overlap with $r_1<r_2<r_0$}\label{fig_5_2}
\end{figure}

We now consider the cases with $r_1<r_2\le r_0$ and $r_1<r_0<r_2$. Figure \ref{fig_5_2} gives the situation 
when $r_1<r_2<r_0$ (note that the case when $r_1<r_2=r_0$ is similar, and the same applies to the 
arguments above). Notice that in both of these cases we have that $n-(r_2-r_1) = r_0-r_1$.

For the case depicted in Figure \ref{fig_5_2}, $r_1<r_2<r_0$, we assume that $V$ is the final $r_0-r_1$ letters in 
$T_{s_1}$, and we also assume that $U$ is the first $r_2-r_1$ letters in $T_{s_2}$. Now either $|U|\le 
\lfloor n/
2\rfloor$ or $|V|\le\lfloor n/2\rfloor$. In either case we have a contradiction.

Now we consider the case where $r_1<r_0<r_2$, which is displayed in Figure \ref{fig_5_b}.
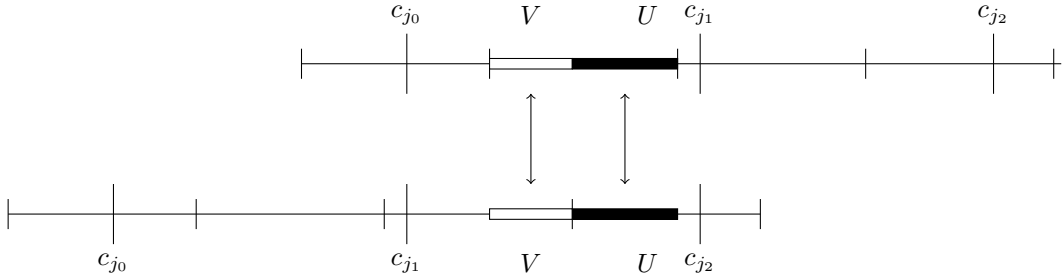
\begin{figure}[h]
\centerline{
\begin{tikzpicture}
%
%
\draw (-5.3,1.0)--(4.8,1.0);
\draw (-9.2,-1.0)--(.8,-1.0);
\draw (-5.3,.8)--(-5.3,1.2);
\draw (-2.8,.8)--(-2.8,1.2);
\draw (-.3,.8)--(-.3,1.2);
\draw (2.2,.8)--(2.2,1.2);
\draw (4.7,.8)--(4.7,1.2);
\draw (-9.2,-1.2)--(-9.2,-.8);
\draw (-6.7,-1.2)--(-6.7,-.8);
\draw (-4.2,-1.2)--(-4.2,-.8);
\draw (-1.7,-1.2)--(-1.7,-.8);
\draw (.8,-1.2)--(.8,-.8);
\draw (-3.9,.6)--(-3.9,1.4);
\draw (0,.6)--(0,1.4);
\draw (3.9,.6)--(3.9,1.4);
\coordinate[label=above:$c_{j_0}$] (A) at (-3.9,1.4);
\coordinate[label=above:$c_{j_1}$] (B) at (0,1.4);  
\coordinate[label=above:$c_{j_2}$] (C) at (3.9,1.4); 
\coordinate[label=below:$c_{j_0}$] (a) at  (-7.8,-1.4);
\coordinate[label=below:$c_{j_1}$] (b) at (-3.9,-1.4); 
\coordinate[label=below:$c_{j_2}$] (c) at  (0,-1.4);  
\draw (-7.8,-1.4)--(-7.8,-.6);
\draw (-3.9,-1.4)--(-3.9,-.6);
\draw (0,-1.4)--(0,-.6);
\draw[fill=white] (-2.8,1.07) rectangle (-1.7,.93);
\draw[fill=white] (-1.7,-.93) rectangle (-2.8,-1.07);
\coordinate[label=above:$V$] (Vu) at (-2.25,1.4);
\coordinate[label=below:$V$] (vd) at (-2.25,-1.4);
\draw[fill=black] (-1.7,1.07) rectangle (-.3,.93);
\draw[fill=black] (-.3,-.93) rectangle (-1.7,-1.07);
\coordinate[label=above:$U$] (Uu) at (-.7,1.4);
\coordinate[label=below:$U$] (Ud) at (-.7,-1.4);
\draw[<->](-2.25,.6)--(-2.25,-.6);
\draw[<->](-1.0,.6)--(-1.0,-.6);
\end{tikzpicture}
}
\caption{The short overlap with $r_1<r_0<r_2$}\label{fig_5_b}
\end{figure}

In the case displayed here in Figure \ref{fig_5_b}, we again assume that $V$ occurs in the final $r_0-r_1$ letters 
of $T_{s_1}$, and we also assume that $U$ occurs in the first $r_2-r_1$ letters of $T_{s_2}$. We then 
have that either $|V|\le\lfloor n/2\rfloor$ or that $|U|\le\lfloor n/2\rfloor$. Either case is a contradiction. So 
we cannot have our overlap occurring in four tiles. Thus, we consider the case when the overlap occurs 
in more than four tiles.

We also not that if we are in the case when the overlap occurs in five tiles, the same arguments hold.

\emph{The overlap is contained in more than four tiles.} We will look at the cases with $r_0<r_1$ and 
$r_1<r_0$, and we will only look at the beginning of the overlap. So we consider Figure \ref{fig_5_3} for the case 
when $r_0<r_1$.

\begin{figure}[h]
\centerline{
\begin{tikzpicture}
\draw (-6.0,-1.0)--(3.5,-1.0);
\draw (-4.0,1.0)--(2.5,1.0);
\coordinate[label=right:$\bullet$] (e) at (3,0);
\coordinate[label=right:$\bullet$] (f) at (3.5,0);
\coordinate[label=right:$\bullet$] (g) at (4,0);
\draw (-4.0,.8)--(-4.0,1.2);
\draw (-1.0,.8)--(-1.0,1.2);
\draw (2.0,.8)--(2.0,1.2);
\draw (-6.0,-.8)--(-6.0,-1.2);
\draw (-3.0,-.8)--(-3.0,-1.2);
\draw (0,-.8)--(0,-1.2);
\draw (3.0,-.8)--(3.0,-1.2);
\coordinate[label=above:$c_{j_0}$] (A) at (-3.5,1.4);
\draw (-3.5,.6)--(-3.5,1.4);
\coordinate[label=below:$c_{j_1}$] (a) at (-3.5,-1.4);
\draw (-3.5,-.6)--(-3.5,-1.4);
\draw[fill=white] (-3,1.07) rectangle (-1,.93);
\draw[fill=white] (-1,-.93) rectangle (-3,-1.07);
\coordinate[label=above:$V$] (Vu) at (-2,1.4);
\coordinate[label=below:$V$] (Vd) at (-2,-1.4);
\draw[<->](-2.0,-.6)--(-2.0,.6);
\draw[fill=black] (-1,1.07) rectangle (0,.93);
\draw[fill=black] (0,-.93) rectangle (-1,-1.07);
\coordinate[label=above:$U$] (Uu) at (-.5, 1.4);
\coordinate[label=below:$U$] (Ud) at (-.5,-1.4);
\draw[<->](-.5,.6)--(-.5,-.6);
\draw[<-](1.0,-.6)--(1.0,.6);
\draw[-](.9,-.1)--(1.1,.1);
\end{tikzpicture}
}
\caption{The long overlap with $r_0<r_1$ and $r_1-r_0\ge\lfloor n/2\rfloor$}\label{fig_5_3}
\end{figure}
We will consider the case with $r_1-r_0\ge\lfloor n/2\rfloor$, as we will cover the logic behind the 
argument for $r_1-r_0\le\lfloor n/2\rfloor$ in Figure \ref{fig_5_4}.

Let $V$ be the final $r_1-r_0$ letters in $T_{s_0}$, then equating yields $V$ as the beginning $r_0-r_1$ 
letters of $T_{s_1+1}$. Since $|V|\ge \lfloor n/2\rfloor$ we can equate the suffix of $T_{s_1+1}$, call it $U$ 
(which is labeled with a dotted line in Figure \ref{fig_5_3}),
with $T_{s_0+1}$. So we have $T_{s_1+1} = VU$. Similarly we can set $S\in\Delta^*$ such that 
$T_{s_0+1} = US$. Notice now that $|U| = n-(r_1-r_0)\le\lfloor n/2\rfloor$. Thus $S$ cannot begin any 
image word of $h$ so the overlap is impossible.

A note for the case when $r_1-r_0\le\lfloor n/2\rfloor$. In this case $|V|\le\lfloor n/2\rfloor$ and we would 
not be able to equate $U$.

\begin{figure}[h]
\centerline{
\begin{tikzpicture}
\draw (-6.0,1.0)--(.5,1.0) ;
\draw (-4.5,-1.0)--(2.0,-1.0) ;
\coordinate[label=right:$\bullet$] (e) at (1.5,0);
\coordinate[label=right:$\bullet$] (f) at (2,0);
\coordinate[label=right:$\bullet$] (g) at (2.5,0);
\draw (-6.0,.8)--(-6.0,1.2) ;
\draw (-3.0,.8)--(-3.0,1.2) ;
\draw (0,.8)--(0,1.2);
\draw (-4.5,-1.2)--(-4.5,-.8) ;
\draw (-1.5,-1.2)--(-1.5,-.8) ;
\draw (1.5,-1.2)--(1.5,-.8);
\draw (-4.0,.6)--(-4.0,1.4) ;
\coordinate[label=above:$c_{j_0}$] (A) at (-4.0,1.4);
\draw (-4.0,-1.4)--(-4.0,-.6) ;
\coordinate[label=below:$c_{j_1}$] (b) at (-4.0,-1.4);
\draw[fill=white] (-3,1.07) rectangle (-1.5,.93);
\draw[fill=white] (-3,-.93) rectangle (-1.5,-1.07);
\coordinate[label=above:$V$] (Vu) at (-2.25,1.4);
\coordinate[label=below:$V$] (Vd) at (-2.25,-1.4);
\draw[<->](-2.25,-.6)--(-2.25,.6);
\draw[<->](-.75,-.6)--(-.75,.6);
\draw (-.85,-.1)--(-.65,.1);
\end{tikzpicture}
}
\caption{The long overlap with $r_1<r_0$ and $r_0-r_1\le\lfloor n/2\rfloor$}\label{fig_5_4}
\end{figure}
Figure \ref{fig_5_4} gives the case when $r_1<r_0$ with $r_0-r_1\le\lfloor n/2\rfloor$. In a similar manner to the 
case where $r_0<r_1$ we pick $V$ to be the suffix of $r_0-r_1$ letters in $T_{s_1}$. Now we can 
possibly find an image word $T_{s_0+1} = VU$ for some $U\in\Delta^*$. But because $|V| = r_0-r_1\le
\lfloor n/2\rfloor$, $U$ cannot be the prefix of any image word, so this formulation of the overlap is 
impossible.

So we see that in order for $h(W)$ to contain an overlap, it must be one so that $|cX|\equiv 0 \pmod{n}$.

\subsubsection{The $|cX|\equiv 0 \pmod{n}$ case.} From Lemma \ref{lem_5_2} we know that the 
beginning letters and ending letter for each image word in $h$ must be distinct. Further we know that the 
suffix of $T_{s_1}$ must be identical to the suffix of $T_{s_0}$ as $r_0=r_1$. This implies that 
$T_{s_0}=T_{s_1}$. Similarly $T_{s_1} = T_{s_2}$.

Pick $d$ to be the letter such that $h(d) = T_{s_0} = T_{s_1} = T_{s_2} = T$. Further because
\[h(W) = Ac_{j_0}Xc_{j_1}Xc_{j_2}XB,\]
we can find subwords $C,D,Y$ of $W$ so that
\[h(W) = h(CdYdYdD) = AcXcXcB.\]
Now we must have that $W = CdYdYdD$ which contains an overlap. Thus we are done.

\end{proof}


\section{The Square-Free Adaptation}

Similarly to the definition of the overlap-free morphisms with unstackable image words we can define 
square-free morphisms with unstackable image words in the 
following manner.

\begin{defn}
Let $h:\Sigma^*\to\Delta^*$ be an $n$-uniform morphism. We call $h$ a square-free morphism with 
unstackable image words if it 
satisfies the following properties:
\begin{itemize}
\item[(i)] $h(W)$ is square-free for all square-free words $W\in\Sigma^*$ with $|W| = 3$
\item[(ii)] $h(a)$ and $h(b)$ do not begin or end with the same letter for all $a,b\in\Sigma$ with $a\neq b$.
\item[(iii)] For $a,b\in\Sigma$, and for all $V\in\Sigma^*$ such that $|V|\le\lfloor n/2\rfloor$,
\[h(a) = SV\quad\textrm{and}\quad h(b) = VU\]
if and only if $S$ is not a suffix of any image word of $h$ and $U$ is not a prefix of any image word of $h
$.
\end{itemize}
\label{defn_5_4}
\end{defn}

Because we cannot consider words like $aab$ to put into $h$, we must add property (ii) in Definition \ref{defn_5_4} 
so that we can use a similar preimage argument in the final portion of the argument. Thus we have the following 
theorem.

\begin{thm}
Any square-free morphism with unstackable image words is square-free.
\label{thm_5_5}
\end{thm}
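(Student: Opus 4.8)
The plan is to mirror the structure of the proof of Theorem \ref{thm_5_3}, but replacing ``overlap'' with ``square'' throughout and carrying the extra hypothesis (ii) of Definition \ref{defn_5_4} to cover the cases that the word $aab$ was previously handling for free. So we assume $h$ is a square-free morphism with unstackable image words, $|h(a)| = n$ for all $a \in \Sigma$, and we must show $W$ is square-free if and only if $h(W)$ is square-free. The easy ($\Leftarrow$) direction is identical in spirit: if $W = AXXB$ for a nonempty $X$, then writing $h(X) = h(X)$ directly gives $h(W) = h(A)h(X)h(X)h(B)$, which visibly contains the square $h(X)h(X)$; arguing by contrapositive establishes that square-freeness of $h(W)$ forces square-freeness of $W$.

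For the hard ($\Rightarrow$) direction, I would again argue by contrapositive: assume $h(W)$ contains a square, say $h(W) = A c_{j_0} X c_{j_1} B$ where the two occurrences of the block starting at positions $j_0$ and $j_1$ coincide (so the square is $c_{j_0}X'c_{j_1}X'$ with $|c_{j_0}X'| = j_1 - j_0$), and split into the cases $|cX| \not\equiv 0 \pmod n$ and $|cX| \equiv 0 \pmod n$ exactly as before. The tiling picture is the same: view $h(W)$ as a line tiled by the length-$n$ image words, mark the tile boundaries and the two copies of the repeated position. For the case $|cX| \not\equiv 0 \pmod n$, a square repeats a single shift rather than the doubly-shifted structure of an overlap, so in place of the relation $r_2 \equiv 2r_1 - r_0 \pmod n$ we simply have a single offset $r_0 \equiv r_1 \pmod n$ being false; I expect the relevant subcase analysis to actually be \emph{simpler} than in Theorem \ref{thm_5_3}, because aligning the two copies of the period produces a $V$ (a suffix-fragment of one tile) equated with $U$ (a prefix-fragment of the next tile) with $|U| + |V| = n$, hence one of them has length $\le \lfloor n/2 \rfloor$, and hypothesis (iii) of Definition \ref{defn_5_4} then forbids $V$ from being a tile suffix or $U$ from being a tile prefix — a contradiction. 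One must also check the sub-case where the repeated fragment is long (spanning more than the two adjacent tiles), which is handled as in Figures \ref{fig_5_3}--\ref{fig_5_4} by propagating the equality one further tile and again invoking (iii). For $|cX| \equiv 0 \pmod n$, the argument follows the last subsection of the previous proof: $r_0 = r_1$ forces the tile containing the first marked position to have the same suffix as the tile containing the second, and by hypothesis (ii) of Definition \ref{defn_5_4} two image words with the same last letter (indeed the same suffix) must be equal, so $T_{s_0} = T_{s_1}$; writing $h(W) = h(CdYdD) = A c X c B$ for suitable subwords $C, D, Y$ of $W$ with $h(d) = T_{s_0}$, we get $W = C dY dY D$, which contains the square $dYdY$, as desired.

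The main obstacle — and the reason hypothesis (ii) is built into Definition \ref{defn_5_4} — is the preimage step in the $|cX| \equiv 0 \pmod n$ case. In the overlap proof, Lemma \ref{lem_5_2}(iii) gave distinctness of first and last letters \emph{because} $h$ was assumed to map length-$3$ overlap-free words to overlap-free words, and words like $aab$ were legitimate inputs; for a square-free morphism we are only allowed to feed $h$ square-free words, so $aab$ is off limits and Lemma \ref{lem_5_2} does not apply. Thus we cannot derive the ``distinct first/last letters'' property and must assume it outright; with it in hand, the equality $T_{s_0} = T_{s_1}$ (and the reconstruction of the square in $W$) goes through verbatim. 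I would flag this explicitly and point to the remark already made in the paper before the theorem statement. A secondary subtlety is that a square, unlike an overlap, does not come with a distinguished central letter forcing the repeated block to span at least a certain number of tiles; but since $\Sigma$ has more than one letter we may, as in Lemma \ref{lem_5_2}, note that a square inside a single $h(a)$ or inside $h(ab)$ would already contradict hypothesis (i) of Definition \ref{defn_5_4} applied to $h(bab)$ or $h(aba)$ for a suitable letter $b \neq a$ — the one place where feeding $h$ a square-free word of length $3$ that happens to contain $ab$ is exactly what we need — so the square genuinely spans at least four tiles and the tiling analysis above is the only remaining case.
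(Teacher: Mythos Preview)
Your proposal is correct and follows essentially the same approach as the paper: the paper's own proof handles the $\Leftarrow$ direction identically, dismisses the $|cXd|\not\equiv 0\pmod n$ case in one line by pointing back to the tiling argument of Theorem~\ref{thm_5_3} (so you actually supply more detail there than the paper does), and uses hypothesis~(ii) of Definition~\ref{defn_5_4} exactly as you describe for the $|cXd|\equiv 0\pmod n$ preimage step. One cosmetic point: your jump from ``not in one or two tiles'' to ``spans at least four tiles'' skips the three-tile case, but that is handled directly by hypothesis~(i) when the three-letter factor is square-free (and otherwise $W$ already contains a square), so no real idea is missing.
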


\begin{proof}
Assume that $h$ is a square-free morphism with unstackable image words such that $|h(a)| = n$ for all $a\in\Sigma$. We must 
show that for some $W\in\Sigma^*$, $W$ is square-free if and only if $h(W)$ is square-free. We will begin 
with the easy direction.

\subsection{The $\Leftarrow$ direction} We will proceed by contrapositive. So assume that $W = AXXB$, 
where $X\in\Sigma^+$ and $A,B\in\Sigma^*$. Write
\[h(W) = h(AXXB) = h(A)h(X)h(X)h(B),\]
which contains the square $h(X)h(X)$. So we are done with this direction.

\subsection{The $\Rightarrow$ direction} Again we proceed by arguing the contrapositive. So we assume 
that
\begin{equation}
h(W) = Ac_{j_0}Xd_{i_0}c_{j_1}Xd_{i_1}B,
\label{eq_5_3}
\end{equation}
where $c=c_{j_0}=c_{j_1}\in\Sigma$, $d = d_{i_0} = d_{i_1}\in\Sigma$ and $A,X,B\in\Sigma^*$. Note that 
we are using $c_{j_0}$ and $c_{j_1}$ so that we can mark the beginning of the square, and similarly for 
the $d$'s and the end of the square.

There are two cases to consider here $|cXd|\not\equiv 0 \pmod{n}$ and $|cXd|\equiv 0 \pmod{n}$. We 
show 
that it is impossible for $|cXd|\not\equiv 0 \pmod{n}$ in an analogous manner as in Theorem 
\ref{thm_5_3}, 
as seen in section 5.1.2.1.

So assume that $|cXd|\equiv 0 \pmod{n}$. From the definition of the pooh square-free morphism we know 
that each image word for $h$ must begin and end with distinct letters. Further we know the suffix of the 
tile containing $c_{j_0}$ must be identical to the suffix of the tile containing $c_{j_1}$. Thus they are the 
same image word, call it $h(z)$ for some $z\in\Sigma$. So because
\[h(W) = AcXcXB,\]
we can find subwords $C,D,Y$ of $W$ such that
\[h(W) = h(CzYzYD) = AcXcXB.\]
Thus we have that $W = AzXzXB$ which contains a square, and we are done.
\end{proof}

\section{Acknowlegments}
I especially would like to thank Dr. George F. McNulty, my thesis advisor, for all
of his insights on the ideas presented here.

\bibliographystyle{plain}
\bibliography{RobsBibTexDatabase}

\end{document}